\documentclass[11pt]{llncs}
\usepackage{amsmath,amssymb}
\usepackage{url}
\usepackage{color}
\usepackage{graphicx}
\usepackage{fullpage}
\usepackage{verbatim}
\usepackage{cite}
\usepackage{hyperref}

\def\N{\mathbb{N}}

\newcommand{\nrank}[1]{\rho_{#1}}
\newcommand{\argmax}{\operatorname{argmax}}

\newcommand{\select}[1]{\operatorname{select}_{#1}}
\newcommand{\rank}[1]{\operatorname{rank}_{#1}}
\newcommand{\abs}[1]{\left|#1\right|}
\newcommand{\menge}[1]{\left\{#1\right\}}
\newcommand{\gauss}[1]{\left\lfloor#1\right\rfloor}
\newcommand{\upgauss}[1]{\left\lceil#1\right\rceil}
\newcommand{\intWort}[1]{\emph{\textbf{#1}}}
\newcommand{\st}{s.t.}

\newcommand{\Wrt}{Wrt.\ }

\newcommand{\tuple}[1]{\left(#1\right)}
\newcommand{\Oh}[1]{\ensuremath{\mathop{}\mathopen{}\mathcal{O}\mathopen{}\left(#1\right)}}
\newcommand{\oh}[1]{\ensuremath{\mathop{}\mathopen{}o\mathopen{}\left(#1\right)}}

\newcommand{\Om}[1]{\ensuremath{\mathop{}\mathopen{}\Omega\mathopen{}\left(#1\right)}}

\newcommand{\bsq}[1]{\lq{#1}\rq}

\newcommand{\instancename}[1]{\mathsf{#1}}

\newcommand{\noderoot}{\instancename{root}}
\newcommand{\LCPA}{\instancename{LCP}}
\newcommand{\ISA}{\instancename{ISA}}
\newcommand{\SA}{\instancename{SA}}
\newcommand{\ST}{\instancename{ST}}
\newcommand{\SucST}{\instancename{SucST}}

\newcommand{\parent}{\instancename{parent}}

\newcommand{\levelanc}{\instancename{level\_anc}}
\newcommand{\strdepth}{\instancename{str\_depth}}

\newcommand{\height}[1]{\ensuremath{\mathop{}\mathopen{}\instancename{height}\mathopen{}\left(#1\right)}}

\newcommand{\citeauthor}[1]{\cite{#1}}

\pagestyle{plain}

\title{
Lempel Ziv Computation In Small Space (LZ-CISS)
}

\author{
  Johannes Fischer
  \and
  Tomohiro I
  \and
  Dominik K\"{o}ppl
}
\institute{
  Department of Computer Science, TU Dortmund, Germany\\
  \email{\{johannes.fischer, tomohiro.i\}@cs.tu-dortmund.de,}
  \email{dominik.koeppl@tu-dortmund.de}
}

\begin{document}
\maketitle

\begin{abstract}
	For both the Lempel Ziv 77- and 78-factorization we propose algorithms  
	generating the respective factorization using $(1+\epsilon) n \lg n + \Oh{n}$ bits (for any positive constant $\epsilon \le 1$) working space (including the space for the output) for any text of size $n$ over an integer alphabet in $\Oh{n / \epsilon^{2}}$ time.
\end{abstract}

\section{Introduction}
It is difficult to find any practical scenario in computer science for which one could not reason about compression.
Although common focus lies on compression of data on disc storage, for some usages, squeezing transient memory is also practically beneficial.
For instance, the zram module of modern Linux kernels~\cite{RichardIII2014S3} compresses blocks of the main memory in order to prevent the system from running out of working memory.
Compressing RAM is sometimes more preferable than storing transient data on secondary storage (e.g., in a swap file), 
as the latter poses a more severe performance loss.
Another example are websites that usually transferred as ``gzipped'' data by hosting servers~\cite{FGPALZ}.
A server may cache generated webpages in a compressed form in RAM for performance benefits.
To sum up, a common task of these scenarios is the compression and maintenance of data in main memory in order to provide a space-economical, fast access.

Central in many compression algorithms are the LZ77~\cite{Ziv1977uas} or LZ78~\cite{Ziv1978Coi} factorizations.
Both techniques were invented in the late 70's and set a milestone in the field of data compression.
Since main memory sizes of ordinary computers do not scale as fast as the growth of datasets, 
insufficient memory is a well-aware problem; 
both huge mainframes with massive datasets and tiny embedded systems are valid examples for which a simple compressor may end up depleting all RAM.
Besides, they have also been found to be a valuable tool for 
detecting various kinds of regularities in strings~\cite{Crochemore1986TaR,Main1989Dlm,Kolpakov1999FMR,Kolpakov2000FRw,Gusfield2004Lta,Duval2004Lco,Kociumaka2012lta}, 
for indexing~\cite{Kaerkkaeinen1996Lpa,Karkkainen1998LIq,Gagie2012FGS,Gagie2014LSw,Navarro2004Itu,Ferragina2005Ict} and 
for analyzing strings~\cite{Crochemore2003SSA,li05:_lz78_based_strin_kernel,li06:_image_class_via_lz78_based_strin_kernel}.

Large datasets pose a challenge to the main memory budget.
For a solution, one either has to think about algorithm engineering in external memory, or about how to slim down memory consumption during computation in RAM.
\Wrt the latter, we propose an approach that uses $(1+\epsilon) n \lg n + \Oh{n}$ bits (for any positive constant $\epsilon \le 1$) working space 
(including the space for the output) while sustaining linear time computation.
Our approach differs from 
the more recent algorithms (see below),
as it uses a succinct suffix tree representation.

\subsubsection{Related Work. }
While there are naive algorithms that take $\Oh{1}$ working space with quadratic running time (for both LZ77 and LZ78), linear time algorithms with very restricted space emerged only in recent years.

\Wrt LZ77,
the bound of $3n \lg n$ bits set by~\cite{Goto:2013:SFL:2495257.2495853} was very soon lowered to $2n \lg n$ by~\cite{Karkkainen2013LTL}.
For small alphabet size $\sigma$, the upper bound of $n \lg n + \Oh{\sigma \lg n}$ bits by~\cite{Goto2014SEL} is also very compelling.
Their common idea is the usage of previous- and/or next-smaller-value-queries~\cite{cstpp}.
While the approach of K\"arkk\"ainen et al.~\cite{Karkkainen2013LTL} stores SA and NSV completely in two arrays, Goto et al.~\cite{Goto2014SEL} can cope with a single array whose length depends on the alphabet size.
In \cite{Kempa2013LfS}, a practical variant having the worst case performance guarantees of
$(1+\epsilon) n \lg n + n + \Oh{\sigma \lg n}$ bits of working space and $\Oh{n \lg \sigma / \epsilon^2}$ time was proposed.

\Wrt LZ78, 
by using a naive trie implementation, the factorization is computable with $\Oh{z \lg z}$ bits space and $\Oh{n \lg\sigma}$ overall running time, where $z$ is the size of LZ78 factorization. More sophisticated trie implementations~\cite{fischer13alphabet} improve this to $\Oh{n+z{\lg^2\lg\sigma}/{\lg\lg\lg\sigma}}$ time using the same space.

Jansson et al.~\cite{Jansson2015LDT} proposed a compressed dynamic trie based on word packing, and
showed an application to LZ78 trie construction that runs
in $\Oh{n(\lg \sigma +  \lg \lg_{\sigma} n) / \lg_{\sigma} n}$ bits of working space and 
$\Oh{n\lg^2 \lg n / \left(\lg_{\sigma} n \lg \lg \lg n\right)}$ time.
When $\lg \sigma = \oh{{\lg n \lg \lg \lg n}/{\lg^2 \lg n}}$, their algorithm runs even in sub-linear time, but in the worst case it is super-linear.
For an integer alphabet 
a linear time algorithm was recently proposed in~\cite{Nakashima2015CLT},
which utilizes the fact that LZ78 trie is superimposed on the suffix tree of a string.
Although their algorithm works in $\Oh{n \lg n}$ bits of space, they did not care about the constant factor,
and the use of the (complicated) dynamic marked ancestor queries~\cite{amir95improved} seems to 
prevent them from achieving a small constant factor.

\section{Preliminaries}\label{sec:prelim}
Let $\Sigma$ denote an integer alphabet of size $\sigma = \abs{\Sigma} = n^{\Oh{1}}$.
An element $w$ in $\Sigma^*$ is called a \intWort{string},
and $\abs{w}$ denotes its length.
The empty string of length $0$ is called $\varepsilon$.
For any $1 \le i \le \abs{w}$, $w[i]$ denotes the $i$-th character of $w$.
When $w$ is represented by the concatenation of $x, y, z \in \Sigma^*$, i.e., $w = xyz$,
then $x$, $y$ and $z$ are called a \intWort{prefix}, \intWort{substring} and \intWort{suffix} of $w$, respectively.
In particular, a suffix starting at position $i$ of $w$ is called the \intWort{$i$-th suffix} of $w$.
For any $1 \le j \le \abs{w}$, let $S_j(w)$ denote the set of substrings of $w$ that start strictly before $j$.

In the rest of this paper, we take a string $T$ of length $n>0$,
which is subject to LZ77 or LZ78 factorization.
For convenience, let $T[n]$ be a special character that appears nowhere else in $T$,
so that no suffix of $T$ is a prefix of another suffix of $T$. 
Our computational model is the word RAM model with word size $\Om{\lg n}$.
Further, we assume that $T$ is read-only; accessing a word costs $\Oh{1}$ time (e.g., $T$ is stored in RAM using $n \lg \sigma$ bits).

The \intWort{suffix trie} of $T$ is the trie of all suffixes of $T$.
The \intWort{suffix tree} of $T$, denoted by $\ST$, is the tree obtained by compacting the suffix trie of $T$. 
$\ST$ has $n$ leaves and at most $n$ internal nodes.
We denote by $V$ the nodes and by $E$ the edges of $\ST$.
For any edge $e \in E$, the string stored in $e$ is denoted by $c(e)$ and called the \intWort{label} of $e$.
Further, the \intWort{string depth} of a node $v\in V$ is defined as the length of the concatenation of all edge labels on the path from the root to $v$.
The leaf corresponding to the $i$-th suffix is labeled with $i$.
$\SA$ and $\ISA$ denote the suffix array and the inverse suffix array of $T$, respectively~\cite{manber93suffix}.
For any $1 \le i \le n$, $\SA[i]$ is identical to the label of the \emph{lexicographically} $i$-th leaf in $\ST$.
\emph{LCP} and \emph{RMQ} are abbreviations for \emph{longest common prefix} and \emph{range minimum query}, respectively. 
$\LCPA$ is a DS (data structure) on $\SA$ such that $\LCPA[i]$ is the LCP of the \emph{lexicographically} $i$-th smallest suffix with its lexicographic predecessor for $i=2,\ldots,n$.

For any bit vector $B$ with length $\abs{B}$, 
$B.\rank{1}(i)$ counts the number of \bsq{1}-bits in $B[1..i]$, and $B.\select{1}(i)$ gives the position of the $i$-th \bsq{1} in $B$. 
Given $B$, a DS that uses additional $\oh{\abs{B}}$ bits of space and supports
any $\rank{}$/$\select{}$ query on $B$ in constant time can be built in $\Oh{\abs{B}}$ time~\cite{munro01succinct}.

As a running example, we take the string $T = \mathtt{aaabaabaaabaa\$}$.
Since both algorithms for LZ77 and LZ78 are based on the suffix tree, 
we depict the suffix tree of this example string in Fig.~\ref{fig:ST}.

\begin{figure}[h]
\centering{
\includegraphics[scale=0.5]{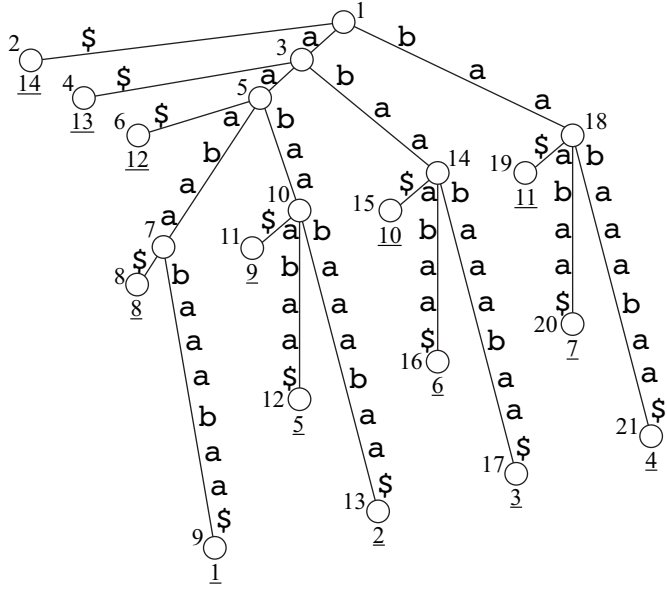}
}
\caption{
  The suffix tree of $T = \mathtt{aaabaabaaabaa\$}$.
  The leaf labels are displayed by the underlined numbers.
  The other numbers show the pre-order of the nodes.
}
\label{fig:ST}
\end{figure}

\subsection{Lempel Ziv Factorization}\label{sec:lz}
A \intWort{factorization} partitions $T$ into $z$ substrings $T=f_1 \cdots f_z$.
These substrings are called \intWort{factors}. In particular, we have:

\begin{definition}
A factorization $f_1\cdots f_z = T$ is called the \intWort{LZ77 factorization} of $T$ iff 
$f_x = \argmax_{S \in S_j(T) \cup \Sigma} \abs{S}$ 
for all $1 \le x \le z$ with $j = \abs{f_1\cdots f_{x-1}}+1$.
\end{definition}
The classic LZ77 factorization adds an additional \intWort{fresh character} to the referencing factors such
that the following definition holds:
\begin{definition} 
A factorization $f_1\cdots f_z = T$ is called the \intWort{classic LZ77 factorization} of $T$ iff 
$f_x$ is the shortest prefix of $f_x \cdots f_z$ that occurs exactly once in $f_1 \cdots f_x$.
\end{definition}

\begin{definition}
A factorization $f_1\cdots f_z = T$ is called the \intWort{LZ78 factorization} of $T$ iff 
$f_x=f'_x\cdot c$ with $f'_x = \argmax_{S \in \menge{f_y : y < x} \cup \menge{\varepsilon} } \abs{S}$ and $c\in\Sigma$ 
for all $1 \le x \le z$.
\end{definition}

We identify factors by text positions, i.e., we call a text position $j$ the \intWort{factor position} of $f_x$ ($1\le x \le z$) iff factor $f_x$ starts at position $j$.
A factor $f_x$ may refer to either (LZ77) a previous text position $j$ (called $f_x$'s \intWort{referred position}), or (LZ78) to a previous factor $f_y$ (called $f_x$'s \intWort{referred factor}---in this case $y$ is also called the \intWort{referred index} of $f_x$).
If there is no suitable reference found for a given factor $f_x$ with factor position $j$, then $f_x$ consists of just the single letter $T[j]$. 
We call such a factor a \intWort{free letter}.
The other factors are called \intWort{referencing factors}.

Our final data structures
allow us to access arbitrary factors (factor position and referred position (LZ77)/referred index (LZ78)) in constant time.

\subsection{Data Structures}\label{sec:cst}

Common to both our algorithms is the construction of a succinct $\ST$ representation.
It consists of $\SA$ with $n \lg n$ bits,
$\LCPA$ with $2n + \oh{n}$ bits,
and a $2|V| + \oh{|V|}$-bit representation of the topology of $\ST$, for which we choose the DFUDS~\cite{Benoit2005RTo} representation.
The latter is denoted by $\SucST$.
We make use of several construction algorithms from the literature:
\begin{itemize}
\item $\SA$ can be constructed in $\Oh{n/\epsilon^2}$ time and $(1+\epsilon)n \lg n$ bits of space, including the space for $\SA$ itself~\cite{karkkainenlinSA}.
\item Given $\SA$, $\LCPA$ can be computed in $\Oh{n}$ time with no extra space~\cite{Valimaki2009Ecs}. Note that $\LCPA$ can only answer $\LCPA[i]$ in constant time 
if $\SA[i]$ is also available.
This is an important remark, because we will discard at several occasions $\SA$ in order to free space, and this discarding causes additional difficulties.
\item Given both $\SA$ and $\LCPA$,
a space economical construction of $\SucST$ was discussed in~\cite[Alg.~1]{cstpp}.
The authors showed that the DFUDS representation of $\ST$ can be built in $\Oh{n}$ time with $n + \oh{n}$ bits of working space.
\end{itemize}

We identify a node $v \in V$ with its pre-order number, which is also the order in which the opening parentheses occur in the DFUDS representation.
So we implicitly identify every node $v \in V$ with its pre-order number (enumerated by $1,\ldots,\abs{V}$).

Since our $\ST$ is static, we can perform various operations on the tree topology in constant time (see, e.g., \cite{cstpp,statdynminmax}).
Among them, we especially use the following operations (for any $v \in V$ and $i \in \N$):
$\parent(v)$ returns the parent of $v$;
and $\levelanc(v, i)$ returns the $i$-th ancestor of $v$.
By building the \emph{min-max tree}~\citeauthor{statdynminmax} on the DFUDS of $\ST$ in $\Oh{n}$ time (using $\Oh{n}$ bits of space),
we can get $\SucST$ supporting these operations in constant time.

Additionally, 
we are interested in answering $\strdepth(v)$ on $\ST$; $\strdepth(v)$ returns the string depth of $v \in V$.
As noted in~\cite{cstpp},
an RMQ data structure on $\LCPA$ can be built in $\Oh{n}$ time and $n + \oh{n}$ bits of working space to support $\strdepth$ in constant time.
Note that the operation $\strdepth$ becomes unavailable when $\SA$ is discarded.

Our algorithms in Sect.~\ref{sec:lz77} and~\ref{sec:lz78} make use of two arrays: 
$A_1$ of size $n\lg n$ bits, and a small helper array $A_2$ of size $\epsilon n \lg n$ bits. 
(We chose such generic names since the contents of these arrays will change several times during the LZ-computation.)

\subsubsection{Node-Marking Vectors. }
In our algorithms, we sometimes deal with subsets $V'$ of $V$.
Pre-order numbers enumerating only the nodes in $V'$ can naturally be used to map nodes in $V'$
to the range $[1..\abs{V'}]$. For this purpose, we use
a \intWort{node-marking vector} $M_{V'}$, which is a bit vector of length $\abs{V}$, such that 
$M_{V'}[v] = 1$ iff $v \in V'$ for any $1 \le v \le \abs{V}$.
We write $\nrank{V'}(v) := M_{V'}.\rank{1}( v )$ for any node $v \in V'$.

\section{LZ77}\label{sec:lz77}

The main idea is to perform leaf-to-top traversals accompanied by the marking of visited nodes.
The marked nodes are indicated by a \bsq{1} in a bit vector of size $\abs{V}$.
Starting from the situation where only the root is marked,
in the $j$-th leaf-to-top traversal for any $1 \le j \le n$, 
we traverse $\ST$ from the leaf labeled with $j$ towards the root, while marking visited nodes
until we encounter an already marked node.
Observe that right before the $j$-th leaf-to-top traversal,
each string of $S_j(T)$ can be obtained by following the path from the root to some marked node.
Hence, the LZ77 factorization can be determined during these leaf-to-top traversals:
If $j$ is a factor position of a factor $f$, the last accessed node $v$ during the $j$-th leaf-to-top traversal reveals $f$'s referred position.
More precisely, $v$ is either the root, or a node that was already marked in a former traversal. 
If $v$ is the root, $f$ is a free letter.
Otherwise, we call $v$ the \intWort{referred node} of $f$.
Then, the factor length is $\strdepth(v)$, 
and the referred position is the minimum leaf label in the subtree rooted at $v$ (retrieved, e.g., by an RMQ on $\SA$).
Since every visited node will be marked, and a marked node will never be unmarked,
the total number of $\parent(\cdot)$-operations is upper bounded by the number of nodes in $\ST$, i.e., $\Oh{n}$.

\subsection{Algorithm}\label{sec:lz77algo}
We start with $\SA$ stored in $A_1[1..n]$, and
some $\Oh{n}$-bit DS to provide $\SucST$, RMQs on $\SA$, and RMQs on $\LCPA$.
Note that the LZ77 computation via leaf-to-top traversals, as explained above,
accesses $\ISA$ $n$ times to fetch suffix leaves that are starting nodes of the traversals,
and accesses $\SA$ $\Oh{z}$ times to compute the factor lengths and the referred positions.
Then, if we have both $\SA$ and $\ISA$,
the LZ77 factorization can be easily done in $\Oh{n}$ time by the leaf-to-top traversals.
However, allowing only $(1+\epsilon)n \lg n + \Oh{n}$ bits for the entire working space, 
it is no longer possible to store both $\SA$ and $\ISA$ completely at the same time.

\subsubsection{With Extra Output Space. } 
Let us first consider the easier case where the result of the factorization can be output \emph{outside} the working space.
We can then use the array+inverse DS of Munro et al.~\cite[Sect.~3.1]{Munro2012Sro},
which allows us to access inverse array's values in $\Oh{1/\epsilon}$ time
by spending additional $\epsilon n \lg n$ bits (on top of the array's size).
Since $\ISA$ is accessed more often than $\SA$,
we first convert $\SA$ on $A_1$ into $\ISA$ and then create its array+inverse DS
so that accessing $\ISA$ and $\SA$ can be done in $\Oh{1}$ and $\Oh{1/\epsilon}$ time, respectively.
Although it is not explicitly mentioned in~\cite{Munro2012Sro}, the DS can be constructed in $\Oh{n}$ time.
Then, the leaf-to-top traversals can be smoothly conducted, leading to $\Oh{z/\epsilon + n}=\Oh{n}$ running time.

Although this is already an improvement over the currently best linear-time algorithm using $2n\lg n$ bits~\cite{Karkkainen2013LTL}, 
doing so would prevent us from also storing the \emph{output} of the LZ77 factorization in the working space.
Solving this is exactly what is explained in the remainder of this section.

\subsubsection{Outline. }
It is difficult to find space for writing the referred positions;
the former algorithm already uses $(1+\epsilon)n \lg n$ bits of working space for the array+inverse DS\@.
Overwriting it would corrupt the DS and cause a problem when accessing $\SA$ or $\ISA$.
We evade this problem by performing several rounds of leaf-to-top traversals during which we build an array that registers every visit of a referred node.
(A minor remark is that this approach does not even need RMQs on $\SA$.) 

Our algorithm is divided into three rounds of leaf-to-top traversals and a final matching phase, all of which will be discussed in detail in the following:

\begin{description}
	\item[First Round:] Construct a bit vector $B_f[1..n]$ marking all factor positions in $T$, and a bit vector $B_r[1..z]$ marking the referencing factors.
             Determine the set of referred nodes $V_r \subset V$, and mark them with a node-marking vector $M_{V_r}$.
	\item[Second Round:] Construct a bit vector $B_D$ counting (in unary) the number of \emph{referred} nodes from $V_r$ visited during each traversal.
	\item[Third Round:] Construct an array $D$ storing the pre-order numbers of all referred nodes visited during each traversal (as counted in the second round).
	\item[Matching:] Convert the pre-order numbers in $D$ to referred positions.
\end{description}
Fig.~\ref{fig:LZ77fact} visualizes the leaf-to-top traversals along with the created data structures $B_D$ and $D$.

\begin{figure}[h]
\centering{
\includegraphics[scale=0.5]{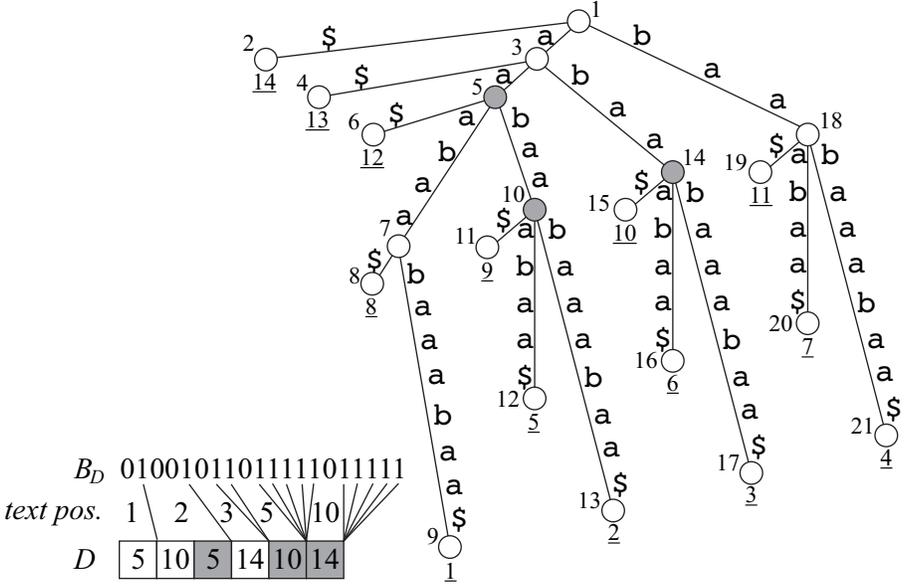}
}
\caption{
The LZ77 factorization partitions $T = \mathtt{aaabaabaaabaa\$}$ as $\mathtt{a|aa|b|aabaa|abaa|\$}$.
  The shaded nodes are the referred nodes. Nodes $5, 10$ and $14$ are referred by $f_2, f_4$ and $f_5$, respectively.
  During the leaf-to-top traversals:
  In the 1st traversal, node $5$ is marked;
  In the 2nd traversal, node $10$ is marked, and node $5$ is referred to by factor $f_2$ with factor position $2$;
  In the 3rd traversal, node $14$ is marked;
  In the 5th traversal, node $10$ is referred to by factor $f_4$ with factor position $5$;
  In the 10th traversal, node $14$ is referred to by factor $f_5$ with factor position $10$;
  Therefore, $B_D = 01001011011111011111$ and
  $D = [5, 10, 5, 14, 10, 14]$, where referred entries are depicted by shaded entries.
}
\label{fig:LZ77fact}
\end{figure}

\subsubsection{Details. }
In the \textbf{first round}, we compute the factor lengths as before by leaf-to-top traversals, which are used to construct $B_f$.
Since the set of referred nodes can be identified during the leaf-to-top traversals, $M_{V_r}$ can be easily constructed.
We also compute $B_r$ by setting $B_r[x] \leftarrow 1$ for every referencing factor $f_x$ with $1 \le x \le z$.
For the rest of the algorithm, the information of $\SA$ is not needed any longer.

We now aim at generating the array $D$ storing a sequence of pre-order numbers of referred nodes,
which will finally enable us to determine the referred positions of each referencing factor.
$D$ is formally defined as a sequence obtained by outputting the pre-orders of referred nodes 
whenever they are marked or referred to during the leaf-to-top traversals.
Hence, each referred node appears in $D$ for the first time when it is marked, 
and after that it occurs whenever it is the last accessed node of the $j$-th traversal, where $1 \le j \le n$ coincides with a factor position. 
To see how $D$ will be useful for obtaining the referred positions,
consider a node $v \in V$ that was marked during the $k$-th traversal.
If we stumble upon $v$ during the $j$-th traversal (for any factor position $j > k$)
we know that $k$ is the referred position for the factor with factor position $j$ (because $v$ had not been marked before the $k$-th traversal).

Alas, just $D$ alone does not tell us \emph{which} referred nodes are found during \emph{which} traversal. 
We want to partition $D$ by the $n$ text positions, 
\st{} we know the traversal numbers which the referred nodes belong to.
This is done by a bit vector $B_D$ that stores a \bsq{1} for each text position $j$, and intersperses these \bsq{1}s with
\bsq{0}s counting the number of referred nodes written to $D$ during the $j$-th traversal.
The size of the $j$-th partition ($1 \le j \le n$) is determined by the number of referred nodes 
accessed during the $j$-th traversal.
Hence the number of \bsq{0}s between the $(j-1)$-th and $j$-th \bsq{1} represents the number of entries in $D$ for the $j$-th suffix.
Formally, $B_D$ is a bit vector 
such that $D[j_b..j_e]$ represents the sequence of referred nodes that are written to $D$ during the $j$-th leaf-to-top traversal, 
where, for any $1 \le j \le n$, $j_b := B_D.\rank{0}(B_D.\select{1}(j-1))+1$ and $j_e := B_D.\rank{0}(B_D.\select{1}(j))$.
Note that for each factor position $j$ of a referencing factor $f$ we encountered its referred node during the $j$-th traversal;
this node is the last accessed node during that traversal, and was stored in $D[j_e]$, which we call the \intWort{referred entry} of $f$.
Note that we do not create a $\rank{0}$ nor a $\select{1}$ DS on $B_D$ because we will get by with sequential scans over $B_D$ and $D$.

Finally, we show the actual computation of $B_D$ and $D$. 
Unfortunately, the computation of $D$ cannot be done in a single round of leaf-to-top traversals; 
overwriting $A_1$ naively with $D$ would result in the loss of necessary information to access the suffix tree's leaves.
This is solved by performing \emph{two} more rounds of leaf-to-top traversals, as already outlined above:
In the \textbf{second round}, with the aid of $M_{V_r}$, $B_D$ is generated by counting the number of referred nodes 
that are accessed during each leaf-to-top traversal.
Next, according to $B_D$, 
we sparsify $\ISA$ by discarding values related to suffixes that 
will not contribute to the construction of $D$ (i.e., those values $i$ for which there is no '0' between the $(i-1)$-th and the $i$-th '1' in $B_D$).
We align the resulting sparse $\ISA$ to the right of $A_1$.
Afterwards, we overwrite $A_1$ with $D$ from left to right in a \textbf{third round} using the sparse $\ISA$. 
The fact that this is possible is proved by the following

\begin{lemma}\label{lemma:Dsize}
  $\abs{D} \le n$.
\end{lemma}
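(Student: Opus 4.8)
The plan is to bound $\abs{D}$ by accounting separately for the two ways an entry lands in $D$: the \emph{marking} contribution and the \emph{referring} contribution. Recall that each referred node $v \in V_r$ is written to $D$ exactly once when it is first marked (during some leaf-to-top traversal), and subsequently once for every factor position $j$ whose $j$-th traversal terminates at $v$ (i.e., $v$ is the referred node of the factor starting at $j$). So $\abs{D} = \abs{V_r} + (\text{number of referencing factors})$. The claim $\abs{D} \le n$ will follow once I show these two quantities together do not exceed $n$.

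First I would handle the referring contribution. By definition, the number of referencing factors is at most $z$, the number of factors, and in fact equals $z$ minus the number of free letters. Next I would observe that $\abs{V_r} \le z$ as well, since the map sending a referencing factor to its referred node is well-defined, but more usefully: each referred node is the referred node of \emph{at least one} factor, so $\abs{V_r}$ is at most the number of referencing factors. The naive combination $\abs{V_r} + \#\{\text{referencing factors}\} \le 2z$ is not good enough, so the real work is to exploit the suffix-tree structure.

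The key step — and the main obstacle — is to charge the entries of $D$ injectively to text positions, i.e., to the $n$ leaves of $\ST$. Here is the idea I would pursue. When a referred node $v$ is first marked during the $k$-th traversal, that traversal started at leaf $k$ and $v$ lies on the root-to-leaf-$k$ path; I would charge this "marking entry" to the leaf $k$. When $v$ is later the terminal (referred) node of the $j$-th traversal for a factor position $j$, I would charge this "referring entry" to the leaf $j$. It remains to check that no leaf is charged twice. A leaf $j$ receives a referring charge only if $j$ is a factor position; and it receives a marking charge only if the $j$-th traversal actually marks a fresh referred node before hitting an already-marked one. The crucial observation is that if $j$ is a factor position and the $j$-th traversal marks one or more fresh nodes, then the terminal node of that traversal is the first \emph{already-marked} node encountered, so it was marked in an \emph{earlier} traversal $k < j$ and charged to leaf $k$, not to leaf $j$ — hence leaf $j$ is not double-charged. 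Conversely, if the terminal node of the $j$-th traversal is fresh (only possible when it is a new referred node marked at this very traversal and $j$ happens to be a factor position whose referred node is that same node... which cannot happen, since a factor's referred node must have been marked strictly before), then there is no referring entry from leaf $j$. I would verify this dichotomy carefully: every entry of $D$ is either a marking entry charged to the leaf of the traversal that created it, or a referring entry charged to a factor-position leaf, and each leaf absorbs at most one charge. Therefore $\abs{D} \le n$.

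A cleaner way to phrase the same accounting, which I would use in the write-up, is: partition $D$ by traversals via $B_D$ (as already set up in the text), and show each of the $n$ blocks $D[j_b..j_e]$ contains at most one entry — namely, during the $j$-th traversal we write to $D$ only the newly marked referred nodes plus possibly the terminal node if $j$ is a factor position, and I claim these coincide or are empty. Indeed, the terminal node is always already marked before traversal $j$ (it is either the root, which is never in $V_r$, or a node marked in a strictly earlier traversal), so it contributes to $D$ only as a referred entry, never as a marking entry; and the newly marked nodes of traversal $j$ are all strict ancestors of the terminal node, hence proper ancestors lying below... wait — this would give possibly several fresh nodes per traversal, so the per-block bound of $1$ is false in general. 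So I would retreat to the global injective-charging argument of the previous paragraph rather than a per-block bound; that is the version I expect to survive scrutiny, and pinning down the non-double-charging claim is where the care is needed.
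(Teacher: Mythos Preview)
Your charging scheme has a genuine gap: a single leaf-to-top traversal can mark \emph{several} referred nodes, so charging every ``marking entry'' of the $k$-th traversal to leaf $k$ is not injective. Concretely, take $T=\mathtt{aabaab\$}$. The LZ77 factorization is $\mathtt{a\,|\,a\,|\,b\,|\,aab\,|\,\$}$, and the referred nodes are the internal node $u_1$ of string depth~$1$ (referred by $f_2$) and its child $u_2$ of string depth~$3$ (referred by $f_4$). Both $u_1$ and $u_2$ lie on the path from the root to the leaf labeled~$1$, and the very first traversal marks both of them; hence leaf~$1$ absorbs two marking charges. Your dichotomy ``marking charge vs.\ referring charge never collide at the same leaf'' may well be true, but it is not enough: you also implicitly assumed each leaf receives at most one marking charge, and that is false. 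The attempted per-block bound you abandoned fails for the same reason.

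The paper's proof avoids this trap by not charging to traversals at all. It starts from the same identity $\abs{D}=\abs{V_r}+z_r$ but then splits by \emph{string depth} and \emph{factor length}: referred nodes of string depth~$1$ are bounded by the number $z_f$ of free letters (each such node is a distinct character, first seen as a free letter), and referred nodes of string depth $>1$ are bounded by the number $z_r^{>1}$ of referencing factors of length $\ge 2$ (each such node is the referred node of at least one such factor). This yields $\abs{V_r}+z_r \le z_f + z_r^1 + 2z_r^{>1}$, and since the factor lengths sum to~$n$ with each factor counted by $z_r^{>1}$ contributing at least~$2$, the right-hand side is $\le n$. If you want an injective-charging version, you must charge referred nodes to \emph{text positions inside factors} (e.g., a depth-$>1$ node to the second position of a length-$\ge 2$ factor that refers to it), not to the leaf that happened to mark them.
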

\begin{proof}
First note that the size of $D$ is $\abs{V_r} + z_r$, 
where $z_r$ is the number of referencing factors (number of \bsq{1}s in $B_r$). 
Hence, we need to prove that $\abs{V_r} + z_r \le n$.
  Let $z_{r}^{1}$ (resp.\ $z_{r}^{>1}$) denote the number of referencing factors of length $1$ (resp.\ longer than $1$),
  and let $V_{r}^{1}$ (resp.\ $V_{r}^{>1}$) denote the referred nodes whose string depth is $1$ (resp.\ longer than $1$).
  Also, $z_{f}$ denotes the number of free letters.
  Clearly, $\abs{V_r} = \abs{V_{r}^{1}} + \abs{V_{r}^{>1}}$, $z_r = z_{r}^{1} + z_{r}^{>1}$, $\abs{V_{r}^{1}} \le z_{f}$, and $\abs{V_{r}^{>1}} \le z_{r}^{>1}$.
  Hence $\abs{V_r} + z_r = \abs{V_{r}^{1}} + \abs{V_{r}^{>1}} + z_{r}^{1} + z_{r}^{>1} \le z_{f} + z_{r}^{1} + 2 z_{r}^{>1} \le n$.
  The last inequality follows from the fact that 
  the factors are counted disjointly by $z_{f}$, $z_{r}^{1}$ and $z_{r}^{>1}$, 
  and the sum over the lengths of all factors is bounded by $n$, and every factor counted by $z_{r}^{>1}$ has length at least $2$.
  
\qed\end{proof}

By Lemma~\ref{lemma:Dsize}, $D$ fits in $A_1$.
Since each suffix having an entry in the sparse $\ISA$ has at least one entry in $D$,
overwriting the remaining $\ISA$ values before using them will never happen.

Once we have $D$ on $A_1$, we start {\bf matching} referencing factors with their referred positions.
Recall that each referencing factor has one referred entry, and 
its referred position is obtained by matching the leftmost occurrence of its referred node in $D$.

Let us first consider the easy case with $\abs{V_r} \le \gauss{n\epsilon}$ such that all referred positions fit into $A_2$ (the helper array of size $\epsilon n \lg n$ bits).
By $B_D$ we know the leaf-to-top traversal number (i.e., the leaf's label) during which we wrote $D[i]$ (for any $1 \le i \le \abs{D}$).
For $1 \le m \le \abs{V_r}$, the zero-initialized $A_2[m]$ will be used to store the smallest suffix number at which we found the $m$-th referred node (i.e., the $m$-th node of $V_r$ identified by pre-order).

Let us consider that we have set $A_2[m] = k$, 
i.e., the $m$-th referred node was discovered for the first time by the traversal of the suffix leaf labeled with $k$.

Whenever we read the referred entry $D[i]$ of a factor $f$ with factor position larger than $k$ and 
$\nrank{V_r}(D[i])=m$, we know by $A_2[m]=k$ that the referred position of $f$ is $k$.
Both the filling of $A_2$ and the matching are done in one single, sequential scan over $D$ (stored in $A_1$) from left to right:
While tracking the suffix leaf's label with a counter $1 \le k \le n$, we look at $t := \nrank{V_r}(D[i])$ and $A_2[t]$ for each array position $1 \le i \le \abs{D}$:
if $A_2[t] = 0$, we set $A_2[t] \leftarrow k$.
Otherwise, $D[i]$ is a referred entry of the factor $f$ with factor position $k$, 
for which $A_2[t]$ stores its referred position.
We set $A_1[i] \leftarrow A_2[t]$.
By doing this, we overwrite the referred entry of every referencing factor $f$ in $D$ with the referred position of $f$.

If $\abs{V_r} > \gauss{n\epsilon}$, we run the same scan multiple times, i.e.,
we partition $\menge{1,\ldots,\abs{V_r}}$ into $\upgauss{\abs{V_r}/(n\epsilon)}$ equi-distant intervals (pad the size of the last one) of size $\gauss{n\epsilon}$,
and perform $\upgauss{\abs{V_r}/(n\epsilon)}$ scans.
In order to skip the referred entries in $D$ belonging to an already scanned part of $V_r$, we use a bit vector that marks exactly those positions.
Since each scan takes $\Oh{n}$ time, the whole computation takes $\Oh{\abs{V_r}/\epsilon} = \Oh{z/\epsilon}$ time.

Now we have the complete information of the factorization:
The length of the factors can be obtained by a select-query on $B_f$, and $A_1$ contains the referred positions of all referencing factors.
By a left shift we can restructure $A_1$ such that $A_1[x]$ tells us the referred position (if it exists, according to $B_r[x]$)
for each factor $1 \le x \le z$.
Hence, looking up a factor can be done in $\Oh{1}$ time.

\subsection{Classic LZ77 factorization}
During the leaf-to-top traversals in Section~\ref{sec:lz77algo}, we have to account for the fact that 
the length of each referencing factor has to be enlarged (due to the fresh character).
It suffices to mark the factors in $B_f$ appropriately to the possibly modified lengths ($B_f$ is used to retrieve position and length of any factor); 
the new shape of $B_f$ induces implicitly a modification of $B_r$ and $B_D$.
The fresh character that ends a referencing factor will never be considered to be a factor beginning.
Finally, the fresh character of each referencing factor can be lookup up with $B_f$ and $T$.
Lemma~\ref{lemma:Dsize} still holds for this variant of the factorization; in fact, since $z_{r}^{1} = 0$ and $V_{r}^{1} = \emptyset$,
the proof gets easier.

\section{LZ78} \label{sec:lz78}

Common implementations use a trie for storing the factors.
In the beginning, the trie just consists of the root.
For each newly generated factor we append a leaf to the trie.
If the parent of this leaf is the root, the factor is a free letter,
otherwise it references the factor that corresponds to the parent node.
Hence, each node (except the root) represents a factor. 
We call this trie the \intWort{LZ78 trie}.
Recall that all trie implementations have a (log-)logarithmic dependence on $\sigma$ for top-down-traversals (see the Introduction);
one of our tricks is using $\levelanc$ queries starting from the leaves in order to get rid of this dependence.
For this task we need $\ISA$ to fetch the correct suffix leaf; 
hence, we first overwrite $\SA$ by its inverse.

\subsection{Algorithm}\label{sec:78:outline}
Interestingly, the LZ78 trie is superimposed on the suffix trie of $T$~\cite{bannaiGrammarLz, Nakashima2015CLT}.
Thus, the LZ78 trie structure can be represented by $\ST$,
with an additional DS storing the number of LZ78 trie nodes that lie on each edge of $\ST$.
Each trie node $v$ is called \intWort{explicit} iff it is not discarded during the compactification of the suffix trie towards $\ST$;
the other trie nodes are called \intWort{implicit}.

For every edge $e$ of $\ST$ we use a counting variable $0 \le n_e \le \abs{c(e)}$ that keeps track of how far $e$ is explored.
If $n_e = 0$, then the factorization has not (yet) explored this edge,
whereas $n_e = \abs{c(e)}$ tells us that we have already reached the ending node $v \in V$ of $e =: (u,v)$.
We defer the question how the $n_e$-
and $|c(e)|$-values are stored in $\epsilon n\lg n$ bits to Sect.~\ref{sec:78:bookk}, as those technicalities might not be of interest to the general audience.

Because we want to have a representative node in $\ST$ for \emph{every} LZ78-factor, we introduce the concept of witnesses:
For any $1 \le x \le z$,
the \intWort{witness} of $f_x$ 
is the ST node that is either the explicit representation of $f_x$, 
or, if such an explicit representation does not exist, the ending node in $\ST$ of the edge on which $f_x$ lies.

Our next task is therefore the creation of an array $W[1..z]$ \st{} $W[x]$ stores 
the pre-order number of $f_x$'s witness.
With $W$ it will be easy to find the referred index $y$ of any referencing factor $f_x$.
That is because $f_y$ will either share the witness with $f_x$, or $W[y]$ is the parent node of $W[x]$.
Storing $W$ will be done by overwriting the first $z$ positions of the array $A_1$.

We start by computing $W[x]$ for all $1 \le x \le z$ in increasing order.
Suppose that we have already processed $x-1$ factors, and now want to determine the witness of $f_x$ with factor position $j$.
$\ISA[j]$ tells us where to find the ST leaf labeled with $j$.
Next, we traverse $\ST$ from the root towards this leaf 
(navigated by $\levelanc$ queries in deterministic constant time per edge)
until we find the first edge $e$ with $n_e < \abs{c(e)}$, namely,
$e$ is the edge on which we would insert a new LZ78 trie leaf.
It is obvious that the ending node of $e$ is $f_x$'s witness, which we store in $W[x]$.
We let the LZ78 trie grow by incrementing $n_e$. 
The length of $f_x$ is easily computed by summing up the $\abs{c(\cdot)}$-values along the traversed path, plus $n_e$'s value.
Having processed $f_x$ with factor position $j \in [x..n]$,
$\ISA$'s values in $A_1[1..j]$ are not needed anymore. 
Thus, it is eligible to overwrite $A_1[x]$ by $W[x]$ for $1 \le x \le z$ while computing $f_x$.
Finally, $A_1[1..z]$ stores $W$.
Meanwhile, we have marked the factor positions in a bit vector $B_f[1..n]$. 

For our running example, we conducted the traversals, and marked the witnesses and LZ78 trie nodes superimposed by $\ST$ in Fig.~\ref{fig:LZ78trie}.
\begin{figure}[h]
\centering{
\includegraphics[scale=0.5]{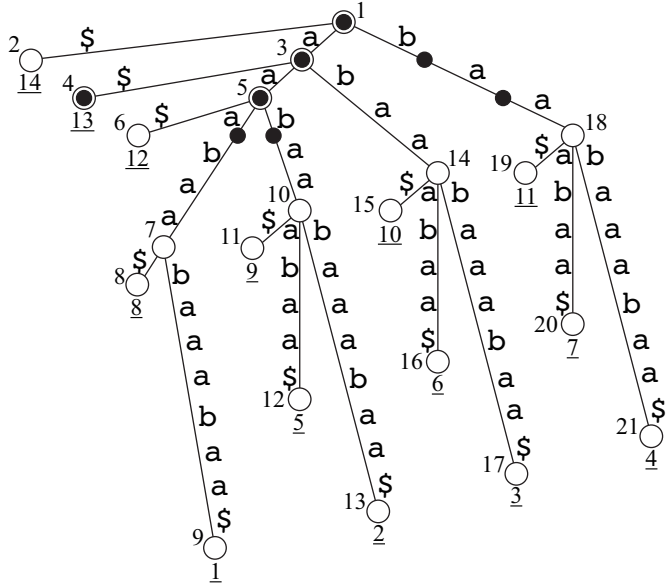}
}
\caption{
  The LZ78 trie for $T = \mathtt{aaabaabaaabaa\$}$ is depicted by bullets on the suffix tree.
  The LZ78 factorization partitions $T$ as $\mathtt{a|aa|b|aab|aaa|ba|a\$}$.
  $W = [3, 5, 18, 10, 7, 18, 4]$ and $V_{\Xi} = \{3, 5, 18\}$.
}
\label{fig:LZ78trie}
\end{figure}

Matching the factors with their references can now be done in a top-down-manner by using $W$.
Let us consider a referencing factor $f_x$ with referred factor $f_y$.
We have two cases:
Whenever $f_y$ is explicitly represented by a node $v$ (i.e., by $f_y$'s witness), 
$v$ is the parent of $f_x$'s witness.
Otherwise, $f_y$ has an implicit representation and hence has the same witness as $f_x$.
Hence, if
$W$ stores at position $x$ the \emph{first} occurrence of $W[x]$ in $W$,
$f_y$ is determined by the largest position $y<x$ for which $W[y]=\parent(v)$;
otherwise ($W[x]$ is \emph{not} the first occurrence of $W[x]$ in $W$),
then the referred factor of $f_x$ is determined by the largest $y<x$ with $W[x]=W[y]$.

Now we hold $W$ in $A_1[1..z]$, leaving us $A_1[z+1..n]$ as free working space that will be used to store a new array $R$,
storing for each witness $w$ the index of the most recently processed factor whose witness is $w$.
However, reserving space in $R$ for \emph{every} witness would be too much (there are potentially $z$ many of them); 
we will therefore have to restrict ourselves to a carefully chosen subset of witnesses. This is explained next.

First, let us consider a witness $w$ that is witnessed by a single factor $f_x$ whose LZ78 trie node is a leaf. 
Because no other factor will refer to $f_x$, we do not have to involve $w$ in the matching.
Therefore, we can neglect all such witnesses during the matching.
The other witnesses (i.e., those being witnessed by at least one factor that is not an LZ78 trie leaf) 
are collected in a set $V_\Xi$ and marked by a bit vector $M_{V_\Xi}$.
$\abs{V_\Xi}$ is at most the number $z_i$ of internal nodes of the LZ78 trie, which is bounded by $n-z$, due to the following

\begin{lemma}\label{lemma:witnessSpace}
$z + z_i \le n$.
\end{lemma}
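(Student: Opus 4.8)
The plan is to relate each factor's length to the depth of its node in the LZ78 trie and then double count. Recall that for every $x$ the factor $f_x = f'_x\cdot c$ is appended to the trie as a child $v_x$ of the node representing $f'_x$; if $f'_x=\varepsilon$ this parent is the root, and otherwise it is $v_y$ for the index $y<x$ with $f_y=f'_x$. By induction on $x$ (with the root at depth $0$) this parent sits at depth $\abs{f'_x}$, so $v_x$ sits at depth $\abs{f'_x}+1=\abs{f_x}$. Since the $z$ factors partition $T$ and the map $x\mapsto v_x$ is a bijection onto the non-root trie nodes, summing over all factors yields $\sum_{x=1}^{z}\mathrm{depth}(v_x)=n$.

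For the second step I would write $\mathrm{depth}(v_x)=1+\abs{\menge{u\neq r:\ u\text{ is a proper ancestor of }v_x}}$, where $r$ is the root, and then exchange the order of summation:
\[ n \;=\; \sum_{x=1}^{z}\mathrm{depth}(v_x)\;=\;z+\sum_{u\neq r} d(u), \]
where $d(u)$ denotes the number of proper descendants of $u$ in the trie (all of which are factor nodes). Now a non-root node $u$ is internal iff it has a child, i.e.\ iff $d(u)\ge 1$, whereas a non-root leaf contributes $d(u)=0$; consequently $\sum_{u\neq r}d(u)\ge z_i$, and therefore $n\ge z+z_i$, as claimed.

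I do not expect a serious obstacle here; the single point to be careful about is the bookkeeping around the root. Both the bijection $f_x\leftrightarrow v_x$ and the final inequality rely on the root representing no factor and on $z_i$ being read as the number of internal trie nodes \emph{other than} the root (counting the root as internal would make the statement off by one, as the LZ78 parsing of $\mathtt{abab}$ into $\mathtt{a},\mathtt{b},\mathtt{ab}$ shows). With that convention fixed, the remaining argument is just a routine exchange of summation.
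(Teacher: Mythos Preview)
Your proof is correct and takes a genuinely different route from the paper's. The paper argues by a four-way case split: writing $\alpha,\beta,\gamma,\delta$ for the numbers of factors that are (free letter vs.\ referencing) and (internal vs.\ leaf) in the LZ78 trie, it uses $\abs{f_x}\ge 2$ for referencing factors to obtain $z+\gamma+\delta\le n$, and then observes that every depth-$1$ internal node has a leaf descendant at depth $\ge 2$, whence $\alpha\le\delta$ and $z_i=\alpha+\gamma\le\gamma+\delta$. Your argument instead exploits the identity $\mathrm{depth}(v_x)=\abs{f_x}$ directly and double-counts ancestor--descendant pairs, arriving at the exact equality $n=z+\sum_{u\ne r}d(u)$ before passing to the inequality. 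This is cleaner and in fact strictly sharper, since it exhibits the slack $n-z-z_i=\sum_{u\ne r}\bigl(d(u)-[u\text{ internal}]\bigr)\ge 0$ explicitly; the paper's decomposition, by contrast, makes the role of the length-$2$ threshold for referencing factors more visible, which mirrors the style of counting used in its LZ77 analysis. Your caveat about excluding the root from $z_i$ matches the paper's implicit convention (there $z_i=\alpha+\gamma$ counts only factor nodes), so no adjustment is needed.
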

\begin{proof}
Let $\alpha$ (resp. $\beta$) be the number of free letters that are internal LZ78 trie nodes (resp. LZ78 trie leaves).
Also, let $\gamma$ (resp. $\delta$) be the number of referencing factors that are internal LZ78 trie nodes (resp. LZ78 trie leaves).
Obviously, $\alpha+\beta+\gamma+\delta = z$.
\Wrt the factor length, each referencing factor has length of at least $2$,
while each free letter is exactly one character long.
Hence $2(\gamma+\delta)+\alpha+\beta = z + \gamma + \delta \le n$.
Since each LZ78 leaf that is counted by $\delta$ has an LZ78 internal node of depth one as ancestor (counted by $\alpha$),
$\alpha \le \delta$ holds.
Hence, $z+z_i \le z+\alpha+\gamma \le z+\gamma+\delta \le n$.
\qed\end{proof}

By Lemma~\ref{lemma:witnessSpace}, if we let $R$ store only the indices of factors whose witnesses are in $V_\Xi$, it fits into $A_1[z+1..n]$, and
we can use $M_{V_\Xi}$ to address $R$.

We now describe how to convert $W$ (stored in $A_1[1..z]$) into the referred indices, such that in the end $A_1[x]$ contains the referred index of $f_x$ for $1 \le x \le z$.
We scan $W = A_1[1..z]$ from left to right while keeping track of
the index of the most recently visited factor that witnesses $v$,
for each witness $v \in V_\Xi$ at $R[\nrank{V_\Xi}(v)]$.
Suppose that we are now processing $f_x$ with witness $v = W[x]$.
\begin{itemize}
\item If $v \notin V_\Xi$ or $R[\nrank{V_{\Xi}}(v)]$ is empty,
      we are currently processing the first factor that witnesses $v$.
      Further, if $f_x$ is not a free letter,
      its referred factor is explicitly represented by the parent of $v$.
      We can find its referred index at position $\nrank{V_\Xi}( \parent(v) )$ in $R$.
\item Otherwise, $v \in V_\Xi$, and $R[\nrank{V_\Xi}(v)]$ has already stored a factor index. Then
      $R[\nrank{V_\Xi}(v)]$ is the referred index of $f_x$.
\end{itemize}
In either case, if $v \in V_\Xi$, we update $R$ by writing the current factor index $x$ to $R[\nrank{V_{\Xi}}(v)]$.
Note that after processing $f_x$, the value $A_1[x]$ is not used anymore.
Hence we can write the referred index of $f_x$ to $A_1[x]$ (if it is a referring factor) or set $A_1[x] \leftarrow 0$ (if it is a free letter).
In the end, $A_1[1..z]$ stores the referred indices of every referring factor.

Now we have the complete information about the LZ78 factorization:
For any $1 \le x \le z$, $f_x$ is formed by $f_y c$, where $y = A_1[x]$ is the referred index and $c = T[B_f.\select{1}(x+1)-1]$ the additional letter (free letters will refer to $f_0 := \varepsilon$).
Hence, looking up a factor can be done in $\Oh{1}$ time.

\subsection{Bookkeeping the LZ78 Trie Representation}\label{sec:78:bookk}

Basically, we store both $n_e$ and $\abs{c(e)}$ for each edge $e$ so as to represent the LZ78 trie construction in each step.
A naive approach would spend $2 \lg (\max_{e \in E} \abs{c(e)})$ bits for every edge, i.e., $4 n \lg n$ bits in the worst case.
In order to reduce the space consumption to $\epsilon n \lg n + \oh{n}$ bits, 
we will exploit two facts: 
(1) the superimposition of the LZ78 trie on $\ST$ takes place only in the \emph{upper} part of $\ST$, 
and (2) most of the needed $\abs{c(e)}$- and $n_e$-values are actually small.

More precisely, we will introduce an upper bound for the $n_e$ values, 
which shows that the necessary memory usage for managing the $n_e$ and $\abs{c(e)}$ values is, without a priori knowledge of the LZ78 trie's shape, actually very low.

Note that although we do not know the LZ78 trie's shape, we will reason about those nodes that might be created by the factorization.
For a node $v \in V$, let $\height{v}$ denote the height of $v$ in the LZ78 trie if $v$ is the explicit representation of an LZ78 trie node; 
otherwise we set $\height{v} = 0$.

For any node $v \in V$, let $l(v)$ denote the number of descendant leaves of $v$.
The following lemma gives us a clue on how to find an appropriate upper bound:
\begin{lemma}\label{lemma:lz78height}
Let $u,v \in V$ with $e := (u,v) \in E$. Further assume that $u$ is the explicit representation of an LZ78 trie node.
Then $\height{v}$ is upper bounded by $l(v) - \abs{c(e)}$.
\end{lemma}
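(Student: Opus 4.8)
The plan is to bound the height of the LZ78 trie node represented by $v$ by tracking how many distinct leaves of $\ST$ lie below $v$ and arguing that each level of the LZ78 subtree hanging off $v$ consumes at least one fresh descendant leaf, after accounting for the $\abs{c(e)}$ implicit trie nodes sitting on the edge $e$ itself. First I would set up the correspondence between LZ78 trie nodes in the subtree below (and including) $v$ and text positions: every LZ78 trie node below $v$ on the path from $u$ corresponds to a factor $f_x$ whose leaf-to-root traversal in $\ST$ passes through that node, hence through $v$; such a traversal originates at an $\ST$-leaf that is a descendant of $v$. So the set of $\ST$-leaves "used" by the portion of the LZ78 trie at or below $v$ is a subset of the $l(v)$ descendant leaves of $v$.

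Next I would argue that distinct LZ78 trie nodes on any root-to-leaf path of the LZ78 trie passing through $v$ force distinct $\ST$-leaves, by the standard property that once an edge of $\ST$ is (partially) explored by the factorization, a later factor traversing it must continue strictly further down — so successive factors along a descending chain of LZ78 nodes "reach" strictly deeper positions in $\ST$ and therefore cannot reuse the same $\ST$-leaf. Combining: the LZ78 subtree rooted at the (implicit) trie node at the top of $e$ has, along its longest branch, at least $\abs{c(e)} + \height{v}$ trie nodes (the $\abs{c(e)} - 1$ implicit nodes strictly inside $e$, then $v$ itself, then a branch of length $\height{v}$ below $v$ in the LZ78 trie — more carefully, $\abs{c(e)}$ implicit/edge nodes up to and including $v$, plus $\height{v}$ further levels), and each of these uses a distinct descendant leaf of $v$. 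Hence $\abs{c(e)} + \height{v} \le l(v)$, i.e. $\height{v} \le l(v) - \abs{c(e)}$.

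The delicate step is making precise why the nodes strictly inside the edge $e$ (the implicit LZ78 trie nodes on $e$) also each "cost" a distinct descendant leaf of $v$, rather than of $u$: the hypothesis that $u$ is explicitly represented is exactly what is needed here, since it guarantees that a factor was already routed through $u$, so the very first factor to descend onto $e$ creates the implicit node just below $u$ and every subsequent descending factor on the chain through $e$ must have its originating $\ST$-leaf strictly deeper, and in particular below $v$ once it has passed $v$; the leaves consumed by implicit nodes strictly between $u$ and $v$ are still descendant leaves of $v$ because any leaf below one of those implicit nodes is below $v$ too (they are all on the same $\ST$-edge, and $v$ is its lower endpoint). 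So the counting collapses to: (number of LZ78 trie nodes on the edge $e$) $+ \height{v} \le l(v)$, and the number of LZ78 trie nodes on $e$ is at most $\abs{c(e)}$ by definition of the counter $n_e \le \abs{c(e)}$. I expect the main obstacle to be bookkeeping the off-by-one in how implicit edge-nodes, the node $v$, and the $\height{v}$ levels below $v$ are counted against the $l(v)$ leaves — one has to pick the indexing of $\height{\cdot}$ (height of a leaf is $0$) consistently so that a branch of $k$ trie nodes below $v$ corresponds to $\height{v} = k$ and to $k$ distinct descendant leaves, with $v$'s own leaf already covered by the $\abs{c(e)}$ term.
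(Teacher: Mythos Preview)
Your proposal is correct and follows essentially the same route as the paper: take a longest LZ78-trie path descending from $u$ through $v$, count that it contains $\abs{c(e)}+\height{v}$ trie nodes below $u$, observe that each such node is a distinct factor whose starting position yields a distinct $\ST$-leaf lying below $v$ (any leaf below an implicit node on $e$ is automatically below $v$), and conclude $\abs{c(e)}+\height{v}\le l(v)$. The paper compresses all of this into two sentences; you have unpacked the same argument. One simplification: your distinctness step is easier than you make it---distinct LZ78 trie nodes are distinct factors, distinct factors have distinct factor positions, and distinct factor positions are distinct $\ST$-leaves; no appeal to ``reaching strictly deeper'' is needed.
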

\begin{proof}
Let $\pi$ be a longest path from $u$ to some descendant leaf of $v$, and
$d:=\height{v}+\abs{c(e)}$ (i.e., the number of LZ78 trie edges along $\pi$).
 By construction of the LZ78 trie, the ST node $v$ must have at least $d$ leaves,
 for otherwise the (explicit or implicit) LZ78 trie nodes on $\pi$
will never get explored by the factorization.
  So $d \le l(v)$, and the statement holds.
\qed\end{proof}
Further, let $\noderoot$ denote the root node of the suffix \emph{trie}. 
In particular, $\noderoot$ is an explicit LZ78 trie node.
Consider two arbitrary nodes $u, v \in V$ with $e:=(u,v) \in E$.
Obviously, the suffix \emph{trie} node of $v$ is deeper than the suffix \emph{trie} node of $u$ by $\abs{c(e)}$.
Putting this observation together with Lemma~\ref{lemma:lz78height}, we define $h : V \rightarrow \N_0$, 
which upper bounds $\height{\cdot}$:
\[ 
	h(v) = 
	\begin{cases}
		n & \text{~if~} v = \noderoot, \\
		\max \tuple{0, \min \tuple{h(u), l(v)} - \abs{c(e)}} & \text{~if there is an~} e := (u,v) \in E. 
	\end{cases}
\]
Since the number of LZ78 trie nodes on an edge below any $v \in V$ is a lower bound for $\height{v}$, we conclude with the following lemma:

\begin{lemma}\label{lemma:ub}
For any edge $e = (v, w) \in E$, $n_e \le \min \tuple{\abs{c(e)}, h(v)}$.
\end{lemma}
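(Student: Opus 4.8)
The plan is to show the two bounds $n_e \le \abs{c(e)}$ and $n_e \le h(v)$ separately. The first is immediate from the very definition of the counting variable $n_e$: we always have $0 \le n_e \le \abs{c(e)}$, since $n_e$ tracks how far along the edge label $c(e)$ the factorization has explored. So only $n_e \le h(v)$ needs work, and this is where Lemma~\ref{lemma:lz78height} together with the recursive definition of $h$ comes in.

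For $n_e \le h(v)$, I would argue by induction on the depth of $v$ in $\ST$, following the recursion defining $h$. The key observation, already stated just before Lemma~\ref{lemma:ub}, is that the number of LZ78 trie nodes currently placed on the edge $e=(v,w)$ — which is exactly $n_e$ — is a lower bound for $\height{v}$ whenever $v$ itself is (or would be) an explicit LZ78 trie node, because those trie nodes on $e$ sit below $v$ in the LZ78 trie and contribute to its height. More precisely: if $n_e > 0$, then the topmost of the $n_e$ trie nodes on $e$ is a child of the LZ78 trie node at $v$, so $\height{v} \ge n_e$ whenever $v$ is explicit. Combining this with $\height{v} \le h(v)$ (which is what the sequence of lemmas is building toward) gives $n_e \le h(v)$.

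Spelling out the induction: for the base case, if $v = \noderoot$ then $h(v) = n \ge n_e$ trivially since $n_e \le \abs{c(e)} \le n$. For the inductive step, let $e = (u,v) \in E$ be the edge into $v$ and let $e' = (v,w)$ be the edge out of $v$ whose counter $n_{e'}$ we wish to bound. If $v$ has no LZ78 trie node on or above it, then no trie node lies on $e'$ either, so $n_{e'} = 0 \le h(v)$. Otherwise $v$ is covered by the LZ78 trie; if $v$ is explicit, Lemma~\ref{lemma:lz78height} applied to the edge $e$ gives $\height{v} \le l(v) - \abs{c(e)}$, and the inductive hypothesis at $u$ gives $\height{u} \le h(u)$, hence $\height{v} \le \min(h(u), l(v)) - \abs{c(e)}$, and since the left side is nonnegative it is at most $\max(0, \min(h(u),l(v)) - \abs{c(e)}) = h(v)$; then $n_{e'} \le \height{v} \le h(v)$. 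If $v$ is implicit, the same bound propagates through the implicit nodes along the edge containing $v$ down to $v$, since each implicit step decreases both the available height and the suffix-trie depth budget by one, which is precisely what the $-\abs{c(e)}$ term in the definition of $h$ accounts for when aggregated over the edge.

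The main obstacle is handling the implicit-node case cleanly: $h$ is defined on $\ST$ nodes $V$ only, but the LZ78 trie nodes being counted by $n_{e'}$ may be implicit, lying strictly inside edges of $\ST$. I expect the cleanest route is to not even talk about $h$ at internal edge positions, but instead to note that $n_{e'}$ equals the number of explicit-or-implicit LZ78 trie nodes strictly below the LZ78 trie node at the ST node $v$ restricted to edge $e'$, hence $n_{e'} \le \height{v}$ directly (each such trie node lies on a root-to-leaf path through $v$ of length $> n_{e'}$), and then invoke the chain $\height{v} \le h(v)$ that the preceding lemmas establish. Framing it this way sidesteps the need to reason about $h$ off the ST nodes and reduces Lemma~\ref{lemma:ub} to the single inequality $\height{v} \le h(v)$ plus the trivial $n_e \le \abs{c(e)}$.
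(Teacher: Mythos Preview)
Your proposal is correct and follows exactly the route the paper takes: the paper states (without a formal proof) that $h$ upper bounds $\height{\cdot}$ by construction from Lemma~\ref{lemma:lz78height}, and that the number of LZ78 trie nodes on an edge below $v$ (i.e., $n_e$) is a lower bound for $\height{v}$, whence $n_e \le \height{v} \le h(v)$, together with the trivial $n_e \le \abs{c(e)}$. Your write-up simply spells out the induction for $\height{v} \le h(v)$; the case ``$v$ implicit'' you worry about is vacuous (any $v \in V$ that is an LZ78 trie node is explicit by definition), and your final paragraph already identifies the clean phrasing that avoids it.
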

Let us remark that Lemma~\ref{lemma:ub} does not yield a tight bound.
For example, the height of the LZ78 trie is indeed bounded by $\sqrt{2n}$ (see, e.g., \cite[Lemma~1]{bannaiGrammarLz}).
But we do not use this property to keep the analysis simple.

Instead, we classify the edges $e \in E$ into two sets, depending on whether $n_e \le \Delta := \gauss{n^{\epsilon/4}}$ holds for sure or not.
By Lemma~\ref{lemma:ub},
this classification separates $E$ into 
$E_{\le\Delta} := \{ (u,v) \in E : \min \tuple{\abs{c((u,v))}, h(u)} \le \Delta \}$ and $E_{>\Delta} := E \setminus E_{\le\Delta}$.
Since $2 \lg \Delta$ bits are enough for bookkeeping any edge $e \in E_{\le\Delta}$, 
the space needed for these edges fits in $2 \abs{E_{\le\Delta}} \lg \Delta \le n \epsilon \lg n$ bits.
Thus, our focus lies now on the edges in $E_{>\Delta}$; each of them costs us $2 \lg n$ bits.
Fortunately, we will show that $\abs{E_{>\Delta}}$ is so small that
the space of $2 \abs{E_{>\Delta}} \lg n$ bits needed by these edges is in fact $\oh{n}$ bits.

We call any $e \in E_{>\Delta}$ a \intWort{$\Delta$-edge} and its ending node a \intWort{$\Delta$-node}.
The set of all $\Delta$-nodes is denoted by $V_\Delta$.
As a first task, let us estimate the number of $\Delta$-edges
on a path from a node $v \in V_\Delta$ to any of its descendant leaves;
because $v$ is a $\Delta$-node with $\height{v} \le h(v)$, this number is upper bounded by
$\gauss{\frac{h(v)}{\Delta}} \le \gauss{\frac{l(v) - \Delta}{\Delta}} = \gauss{\frac{l(v)}{\Delta}} - 1$.
For the purpose of analysis, 
we introduce $\hat{h} : (V_\Delta \cup \menge{\noderoot}) \rightarrow \N_0$, which upper bounds the number of 
$\Delta$-edges that occur on a path from a node to any of its descendant leaves:
\[ 
	\hat{h}(v) = 
	\begin{cases}
		\gauss{\frac{n}{\Delta}} & \text{~if~} v = \noderoot, \\
		\min \tuple{\hat{h}(\hat{p}(v)) - 1, \gauss{\frac{l(v)}{\Delta}} - 1} & \text{~otherwise},
	\end{cases}
\]
where $\hat{p} : V_\Delta \rightarrow (V_\Delta \cup \menge{\noderoot})$ 
returns for a node $v$ either its deepest ancestor that is a $\Delta$-node, 
or the root if such an ancestor does not exist.
Note that $\hat{h}$ is non-negative by the definition of $V_\Delta$.

For the actual analysis, $\alpha(v)$ shall count 
the number of $\Delta$-edges in the subtree rooted at $v \in V_\Delta \cup \menge{\noderoot}$.
\begin{lemma}\label{lemma:DeltaLessOcc}
For any node $v \in (V_\Delta \cup \menge{\noderoot})$,
	$
		\alpha(v) \le \frac{l(v)}{\Delta} \sum_{i=1}^{\hat{h}(v)} \frac{1}{i}.
	$
\end{lemma}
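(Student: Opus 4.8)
The plan is to organize $V_\Delta \cup \menge{\noderoot}$ into a \emph{contracted tree} via the map $\hat p$, and to prove the bound by strong induction on $\hat h(v)$.

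First I would fix notation and derive the governing recurrence. For $v \in V_\Delta \cup \menge{\noderoot}$ write $C(v) := \menge{w \in V_\Delta : \hat p(w) = v}$ for the set of \emph{$\Delta$-children} of $v$. No two distinct nodes of $C(v)$ are in an ancestor--descendant relation (such an ancestor would be a $\Delta$-node lying strictly between $v$ and the descendant, contradicting $\hat p(\cdot)=v$), so the subtrees rooted at the nodes of $C(v)$ have pairwise disjoint leaf sets, all contained among the leaves below $v$; hence $\sum_{w \in C(v)} l(w) \le l(v)$. Since every $\Delta$-node is the ending node of exactly one $\Delta$-edge, $\alpha(v)$ equals the number of $\Delta$-nodes that are proper descendants of $v$. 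Tracing any such node upward along its $\hat p$-chain---which strictly ascends and stays inside the subtree of $v$---until it first reaches $v$, the last node visited before $v$ is a uniquely determined element of $C(v)$; this partitions the $\Delta$-nodes properly below $v$ into the blocks consisting of $w$ together with the $\Delta$-nodes properly below $w$, for $w$ ranging over $C(v)$, and therefore $\alpha(v) = \sum_{w \in C(v)} \tuple{1 + \alpha(w)}$.

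Then I would run the induction on $\hat h(v)$. In the base case $\hat h(v)=0$: any $w \in C(v)$ would have $\hat h(w) \le \hat h(\hat p(w)) - 1 = -1$, contradicting $\hat h \ge 0$, so $C(v) = \emptyset$ and $\alpha(v) = 0$, which equals the empty sum on the right. For the inductive step, fix $v$ with $\hat h(v) = m \ge 1$ and assume the statement for all nodes of strictly smaller $\hat h$-value; for $w \in C(v)$ we have $\hat h(w) \le m-1$, so the induction hypothesis gives $\alpha(w) \le \frac{l(w)}{\Delta}\sum_{i=1}^{\hat h(w)}\frac{1}{i}$. The decisive point is to absorb the additive $1$ into the harmonic sum: because $w \in V_\Delta$ is not the root, the defining recursion for $\hat h$ gives $\hat h(w) \le \gauss{l(w)/\Delta} - 1 \le l(w)/\Delta - 1$, i.e.\ $1 \le \frac{l(w)}{\Delta\,(\hat h(w)+1)}$, whence
\[
  1 + \alpha(w) \le \frac{l(w)}{\Delta}\Bigl(\frac{1}{\hat h(w)+1} + \sum_{i=1}^{\hat h(w)}\frac{1}{i}\Bigr) = \frac{l(w)}{\Delta}\sum_{i=1}^{\hat h(w)+1}\frac{1}{i} \le \frac{l(w)}{\Delta}\sum_{i=1}^{m}\frac{1}{i},
\]
the last step using $\hat h(w)+1 \le \hat h(\hat p(w)) = m$. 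Summing over $w \in C(v)$ and applying $\sum_{w \in C(v)} l(w) \le l(v)$ yields $\alpha(v) \le \frac{l(v)}{\Delta}\sum_{i=1}^{m}\frac{1}{i}$, which closes the induction.

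I expect the main obstacle to be the careful justification of the recurrence $\alpha(v) = \sum_{w \in C(v)}\tuple{1 + \alpha(w)}$---in particular that walking up the $\hat p$-chain from an arbitrary $\Delta$-node below $v$ always lands on exactly one member of $C(v)$, so that the blocks really do partition the $\Delta$-nodes below $v$---together with the recognition of the ``$+1$ absorption'' identity $1 \le \frac{l(w)}{\Delta(\hat h(w)+1)}$; once these are in hand, the remaining estimates are routine.
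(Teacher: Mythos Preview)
Your proof is correct and follows essentially the same approach as the paper: strong induction on $\hat h(v)$, decomposition into the $\Delta$-children $C(v)=\{w\in V_\Delta:\hat p(w)=v\}$, the induction hypothesis on each child, the absorption of the additive $+1$ via $(\hat h(w)+1)\Delta\le l(w)$, and the final summation using $\sum_{w\in C(v)}l(w)\le l(v)$. The only cosmetic difference is that the paper additionally groups the children $w$ by the value $k'=\hat h(w)$ and treats the case $\hat h(v)=1$ separately as a second base case, whereas you handle each child individually and let the inductive step cover $\hat h(v)=1$ as well; your presentation is thus slightly more streamlined, but the underlying argument is identical.
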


\begin{proof}
We proceed by induction over the values of $\hat{h}(v)$ for every $v\in V_\Delta$.
For $\hat{h}(v) = 0$ the subtree rooted at $v$ has no $\Delta$-edges; hence $\alpha(v) = 0$.
If $\hat{h}(v) = 1$, any $\Delta$-node $w$ of the subtree rooted at $v$ holds the property $\hat{h}(w) = 0$.
Hence, none of those $\Delta$-nodes are in ancestor-descendant relationship to each other.
By the definition of $\Delta$-nodes, for any $\Delta$-node $u$, 
we have $0 \le \gauss{\frac{l(u)}{\Delta}} - 1$, and hence, $\Delta \le l(u)$.
By $\Delta \alpha(v) \le \sum_{u \in V_{\Delta}, \hat{p}(u) = v} l(u) \le l(v)$ we get
$\alpha(v) \le \frac{l(v)}{\Delta}$.

For the induction step,
let us assume that the induction hypothesis holds for every $u\in V_\Delta$ with $\hat{h}(u) < k$.
Let us take a $v \in V_\Delta$ with $\hat{h}(v) = k$.
Further, let
$V_{k'} := \menge{ u \in V_\Delta : \hat{p}(u) = v \text{~and~} \hat{h}(u) = k'} $ for $0 \le k' \le k-1$
denote the set of $\Delta$-nodes that have the same $\hat{h}$ value and are descendants of $v$, without having a $\Delta$-node as ancestor that is a descendant of $v$.
These constraints ensure that there does not exist any $u \in \bigcup_{0 \le k' \le k-1} V_{k'} =: \mathcal{V}$  
that is ancestor or descendant of some node of $\mathcal{V}$. 
Thus the sets of descendant leaves of the nodes of $\mathcal{V}$ are disjoint.
So it is eligible to denote by $L_{k'} := \sum_{u \in V_{k'}} l(u)$ the number of descendant leaves of all nodes of $V_{k'}$.
It is easy to see that $\sum_{k'=0}^{k-1} L_{k'} \le l(v)$.
Now, by the hypothesis, and the fact that each $u \in \mathcal{V}$ is the highest $\Delta$-node on every path from $v$ to any leaf below $u$, we get
\[
	\alpha(v) 
	\le 
	\abs{V_{0}} + \sum_{k'=1}^{k-1} \left( \sum_{u\in V_{k'}} \frac{l(u)}{\Delta} \sum_{i=1}^{k'} \frac{1}{i} 
	+ \abs{V_{k'}} \right)
	=
	\abs{V_{0}} + \sum_{k'=1}^{k-1} \left( \frac{L_{k'}}{\Delta} \sum_{i=1}^{k'} \frac{1}{i} + \abs{V_{k'}} \right).
\]
By definition of $V_{k'}$ and $\hat{h}$, we have $\hat{h}(u) = k' \le \gauss{\frac{l(u)}{\Delta}}-1$ and hence
$(k'+1)\Delta \le l(u)$ for any $u \in V_{k'}$.
This gives us $\frac{L_{k'}}{(k'+1)\Delta} = \sum_{u \in V_{k'}} \frac{l(u)}{(k'+1)\Delta} \ge \abs{V_{k'}}$.
In sum, we get
\[
	\alpha(v) \le \frac{L_{0}}{\Delta} + \sum_{k'=1}^{k-1} \frac{L_{k'}}{\Delta} \sum_{i=1}^{k'+1} \frac{1}{i} 
        = \sum_{k'=0}^{k-1} \frac{L_{k'}}{\Delta} \sum_{i=1}^{k'+1} \frac{1}{i} 
	\le \frac{l(v)}{\Delta} \sum_{i=1}^k \frac{1}{i}\ .
\]
\qed\end{proof}

By Lemma~\ref{lemma:DeltaLessOcc}, $\abs{E_{>\Delta}} = \alpha(\noderoot) \le \frac{n}{\Delta} \sum_{i=1}^{\frac{n}{\Delta}} \frac{1}{i}$.
Since
$\sum_{i=1}^{\frac{n}{\Delta}} \frac{1}{i} \le 1 + \ln\frac{n}{\Delta}$,
 we have $\alpha(\noderoot) \le \frac{n}{\Delta} + \frac{n}{\Delta} \ln\frac{n}{\Delta} = \Oh{\frac{n}{\Delta} \lg\frac{n}{\Delta}} 
= \Oh{n\lg n / \left(n^{\epsilon/4 }\right)}$.
We conclude that the space needed for $E_{>\Delta}$ is $2 \abs{E_{>\Delta}} \lg n = \Oh{\frac{n\lg^2 n}{n^{\epsilon/4}}} = \oh{n}$ bits.

Finally, we explain how to implement the data structures for bookkeeping the LZ78 trie representation.
By an additional node-marking vector $M_{V_\Delta}$ that marks the $V_\Delta$-nodes, we divide the edges into $E_{\le\Delta}$ and $E_{>\Delta}$.
$\rank{}/\select{}$ on $M_{V_\Delta}$ allows us to easily store, access and increment the $n_e$ values for all edges in constant time.
$M_{V_\Delta}$ can be computed in $\Oh{n}$ time when we have $\SA$ on $A_1$:
since $\strdepth$ allows us to compute every $\abs{c(e)}$ value in constant time,
we can traverse $\ST$ in a DFS manner while computing $h(v)$ for each node $v$, and hence,
it is easy to judge whether the current edge belongs to $E_{>\Delta}$.
In order to store the $h$ values for all ancestors of the current node we use a stack.
Observe that the $h$ values on the stack are monotonically increasing;
hence we can implement it using a DS with $\Oh{n}$ bits~\cite[Sect.~4.2]{Fischer2010OSR}.

\bibliographystyle{splncs03}
\bibliography{ref,literatur}

\begin{thebibliography}{10}
\providecommand{\url}[1]{\texttt{#1}}
\providecommand{\urlprefix}{URL }

\bibitem{amir95improved}
Amir, A., Farach, M., Idury, R.M., Poutr{\'e}, J.A.L., Sch\"affer, A.A.:
  Improved dynamic dictionary matching. Information and Computation  119(2),
  258--282 (1995)

\bibitem{bannaiGrammarLz}
Bannai, H., Inenaga, S., Takeda, M.: {Efficient LZ78 Factorization of Grammar
  Compressed Text.} In: Calderón-Benavides, L., González-Caro, C.N., Chávez,
  E., Ziviani, N. (eds.) {SPIRE}. {Lecture Notes in Computer Science}, vol.
  7608, pp. 86--98. Springer (2012)

\bibitem{Benoit2005RTo}
Benoit, D., Demaine, E.D., Munro, J.I., Raman, R., Raman, V., Rao, S.S.:
  Representing trees of higher degree. Algorithmica  43(4),  275--292 (2005)

\bibitem{Crochemore1986TaR}
Crochemore, M.: Transducers and repetitions. Theor. Comput. Sci.  45(1),
  63--86 (1986)

\bibitem{Crochemore2003SSA}
Crochemore, M., Landau, G.M., Ziv{-}Ukelson, M.: A subquadratic sequence
  alignment algorithm for unrestricted scoring matrices. {SIAM} J. Comput.
  32(6),  1654--1673 (2003)

\bibitem{Duval2004Lco}
Duval, J., Kolpakov, R., Kucherov, G., Lecroq, T., Lefebvre, A.: Linear-time
  computation of local periods. Theor. Comput. Sci.  326(1-3),  229--240 (2004)

\bibitem{Ferragina2005Ict}
Ferragina, P., Manzini, G.: Indexing compressed text. J. ACM  52(4),  552--581
  (2005)

\bibitem{Fischer2010OSR}
Fischer, J.: Optimal succinctness for range minimum queries. In: LATIN. pp.
  158--169 (2010)

\bibitem{fischer13alphabet}
Fischer, J., Gawrychowski, P.: Alphabet-dependent string searching with
  wexponential search trees (2013), coRR abs/1302.3347

\bibitem{Gagie2012FGS}
Gagie, T., Gawrychowski, P., K{\"a}rkk{\"a}inen, J., Nekrich, Y., Puglisi,
  S.J.: A faster grammar-based self-index. In: LATA. pp. 240--251 (2012)

\bibitem{Gagie2014LSw}
Gagie, T., Gawrychowski, P., K{\"a}rkk{\"a}inen, J., Nekrich, Y., Puglisi,
  S.J.: {LZ77}-based self-indexing with faster pattern matching. In: LATIN. pp.
  731--742 (2014)

\bibitem{Goto:2013:SFL:2495257.2495853}
Goto, K., Bannai, H.: Simpler and faster {L}empel {Z}iv factorization. In:
  Proceedings of the 2013 Data Compression Conference. pp. 133--142. DCC '13,
  IEEE Computer Society, Washington, DC, USA (2013)

\bibitem{Goto2014SEL}
Goto, K., Bannai, H.: {Space Efficient Linear Time Lempel-Ziv Factorization for
  Small Alphabets}. In: {Data Compression Conference, {DCC} 2014, Snowbird, UT,
  USA, 26-28 March, 2014}. pp. 163--172 (2014)

\bibitem{Gusfield2004Lta}
Gusfield, D., Stoye, J.: Linear time algorithms for finding and representing
  all the tandem repeats in a string. J. Comput. Syst. Sci.  69(4),  525--546
  (2004)

\bibitem{Jansson2015LDT}
Jansson, J., Sadakane, K., Sung, W.: Linked dynamic tries with applications to
  lz-compression in sublinear time and space. Algorithmica  71(4),  969--988
  (2015)

\bibitem{Karkkainen2013LTL}
K{\"a}rkk{\"a}inen, J., Kempa, D., Puglisi, S.J.: {Linear Time {L}empel-{Z}iv
  Factorization: Simple, Fast, Small}. In: {CPM}. pp. 189--200 (2013)

\bibitem{karkkainenlinSA}
K{\"a}rkk{\"a}inen, J., Sanders, P., Burkhardt, S.: {Linear Work Suffix Array
  Construction}. J. ACM  53(6),  918--936 (Nov 2006)

\bibitem{Karkkainen1998LIq}
K{\"a}rkk{\"a}inen, J., Sutinen, E.: {Lempel-Ziv} index for {\it q}-grams.
  Algorithmica  21(1),  137--154 (1998)

\bibitem{Kaerkkaeinen1996Lpa}
K{\"a}rkk{\"a}inen, J., Ukkonen, E.: {Lempel-Ziv} parsing and sublinear-size
  index structures for string matching. In: Proc. 3rd South American Workshop
  on String Processing (WSP'96. pp. 141--155. Carleton University Press (1996)

\bibitem{Kempa2013LfS}
Kempa, D., Puglisi, S.J.: {{L}empel-{Z}iv factorization: Simple, fast,
  practical}. In: {ALENEX}. pp. 103--112 (2013)

\bibitem{Kociumaka2012lta}
Kociumaka, T., Kubica, M., Radoszewski, J., Rytter, W., Walen, T.: A linear
  time algorithm for seeds computation. In: SODA. pp. 1095--1112 (2012)

\bibitem{Kolpakov1999FMR}
Kolpakov, R.M., Kucherov, G.: Finding maximal repetitions in a word in linear
  time. In: 40th Annual Symposium on Foundations of Computer Science, {FOCS}
  '99, 17-18 October, 1999, New York, NY, {USA}. pp. 596--604 (1999)

\bibitem{Kolpakov2000FRw}
Kolpakov, R.M., Kucherov, G.: Finding repeats with fixed gap. In: {SPIRE}. pp.
  162--168 (2000)

\bibitem{li05:_lz78_based_strin_kernel}
Li, M., Sleep, R.: An {LZ78} based string kernel. In: Proc. ADMA 2005. pp.
  678--689 (2005)

\bibitem{li06:_image_class_via_lz78_based_strin_kernel}
Li, M., Zhu, Y.: Image classification via {LZ78} based string kernel: A
  comparative study. In: Proc. PAKDD 2006. pp. 704--712 (2006)

\bibitem{Main1989Dlm}
Main, M.G.: Detecting leftmost maximal periodicities. Discrete Applied
  Mathematics  25(1-2),  145--153 (1989)

\bibitem{manber93suffix}
Manber, U., Myers, E.W.: Suffix arrays: A new method for on-line string
  searches. SIAM J.\ Comput.  22(5),  935--948 (1993)

\bibitem{Munro2012Sro}
Munro, J.I., Raman, R., Raman, V., Rao, S.S.: Succinct representations of
  permutations and functions. Theor. Comput. Sci.  438,  74--88 (2012)

\bibitem{munro01succinct}
Munro, J.I., Raman, V.: Succinct representation of balanced parentheses and
  static trees. SIAM J.\ Comput.  31(3),  762--776 (2001)

\bibitem{Nakashima2015CLT}
Nakashima, Y., I, T., Inenaga, S., Bannai, H., Takeda, M.: Constructing {LZ78}
  tries and position heaps in linear time for large. Inform.\ Process.\ Lett.
  (2015), {In Press}

\bibitem{Navarro2004Itu}
Navarro, G.: Indexing text using the {Ziv-Lempel} trie. J. Discrete Algorithms
  2(1),  87--114 (2004)

\bibitem{statdynminmax}
Navarro, G., Sadakane, K.: {Fully Functional Static and Dynamic Succinct
  Trees}. ACM Trans. Algorithms  10(3),  16:1--16:39 (May 2014)

\bibitem{cstpp}
Ohlebusch, E., Fischer, J., Gog, S.: {CST++.} In: Ch{\'a}vez, E., Lonardi, S.
  (eds.) {SPIRE}. {Lecture Notes in Computer Science}, vol. 6393, pp. 322--333.
  Springer (2010)

\bibitem{FGPALZ}
Ouyang, J., Luo, H., Wang, Z., Tian, J., Liu, C., Sheng, K.: {FPGA}
  implementation of {GZIP} compression and decompression for {IDC} services.
  In: Field-Programmable Technology (FPT), 2010 International Conference on.
  pp. 265--268 (Dec 2010)

\bibitem{RichardIII2014S3}
Richard, G.G., Case, A.: {In lieu of swap: Analyzing compressed {RAM} in {Mac
  OS X} and {Linux}}. Digital Investigation  11, Supplement 2(0),  S3--S12
  (2014), fourteenth Annual {DFRWS} Conference

\bibitem{Valimaki2009Ecs}
V{\"{a}}lim{\"{a}}ki, N., M{\"{a}}kinen, V., Gerlach, W., Dixit, K.:
  Engineering a compressed suffix tree implementation. {ACM} Journal of
  Experimental Algorithmics  14 (2009)

\bibitem{Ziv1977uas}
Ziv, J., Lempel, A.: A universal algorithm for sequential data compression.
  {IEEE} Transactions on Information Theory  23(3),  337--343 (1977)

\bibitem{Ziv1978Coi}
Ziv, J., Lempel, A.: Compression of individual sequences via variable-rate
  coding. {IEEE} Transactions on Information Theory  24(5),  530--536 (1978)

\end{thebibliography}

\end{document}